\newcommand{\REV}[1]{\ensuremath{\overline{#1}}}
\newcommand{\RLCSA}{\ensuremath{\mathsf{RLBWT}}}
\newcommand{\ST}{\ensuremath{\mathsf{ST}}}
\newcommand{\SA}{\ensuremath{\mathsf{SA}}}
\newcommand{\ISA}{\ensuremath{\mathsf{ISA}}}
\newcommand{\LCP}{\ensuremath{\mathsf{LCP}}}
\newcommand{\PLCP}{\ensuremath{\mathsf{PLCP}}}
\newcommand{\BWT}{\ensuremath{\mathsf{BWT}}}
\newcommand{\CDAWG}{\ensuremath{\mathsf{CDAWG}}}
\newcommand\SP[1]{\mathtt{sp}(#1)}
\newcommand\EP[1]{\mathtt{ep}(#1)} 
\newcommand\INTERVAL[1]{\mathtt{range}(#1)}
\newcommand{\newe}{e}
\newtheorem{definition}{Definition}
\newtheorem{lemma}{Lemma} 
\newtheorem{theorem}{Theorem}
\newtheorem{property}{Property}
\newtheorem{corollary}{Corollary}
\title{Representing the suffix tree with the CDAWG
}
\author[1]{Djamal Belazzougui}
\author[2]{Fabio Cunial}
\affil[1]{DTISI, CERIST Research Center, Algiers, Algeria.}
\affil[2]{Max Planck Institute of Molecular Cell Biology and Genetics, Dresden, Germany.}
\begin{document}

\maketitle

\begin{abstract}
Given a string $T$, it is known that its suffix tree can be represented using the compact directed acyclic word graph (CDAWG) with $\newe_T$ arcs, taking overall $O(\newe_T+\newe_{\REV{T}})$ words of space, where $\REV{T}$ is the reverse of $T$, and supporting some key operations in time between $O(1)$ and $O(\log{\log{n}})$ in the worst case. This representation is especially appealing for highly repetitive strings, like collections of similar genomes or of version-controlled documents, in which $e_T$ grows sublinearly in the length of $T$ in practice. In this paper we augment such representation, supporting a number of additional queries in worst-case time between $O(1)$ and $O(\log{n})$ in the RAM model, without increasing space complexity asymptotically. Our technique, based on a heavy path decomposition of the suffix tree, enables also a representation of the suffix array, of the inverse suffix array, and of $T$ itself, that takes $O(e_T)$ words of space, and that supports random access in $O(\log{n})$ time. Furthermore, we establish a connection between the reversed CDAWG of $T$ and a context-free grammar that produces $T$ and only $T$, which might have independent interest.
\end{abstract}

\section{Introduction}

Given a string $T$ of length $n$, the compressed suffix tree \cite{RNOtalg11,NRdcc14.1} and the compressed suffix array can take an amount of space that is bounded by the $k$-th order empirical entropy of $T$, but such measure of redundancy is known not to be meaningful when $T$ is very repetitive \cite{Gagie06a}, e.g. a collection of similar genomes. The space taken by such compressed data structures also includes a $o(n)$ term, typically $O(n/\mbox{polylog}(n))$, which can become an obstacle when $T$ is very compressible. Rather than compressing the suffix array, we could compress a \emph{differentially encoded suffix array} \cite{GNFjea14}, which stores at every position the difference between two consecutive positions of the suffix array. Previous approaches have compressed such differential array using grammar or Lempel-Ziv compression \cite{GNFjea14}, and the same methods can be used to compress the suffix tree topology and the LCP array \cite{ACNalgo13,NOsea14.1}. Such heuristics, however, have either no theoretical guarantee on their performance \cite{ACNalgo13,NOsea14.1}, or weak ones \cite{GNFjea14}.

In previous research \cite{belazzougui2015composite} we described a representation of the suffix tree of $T$ that takes space proportional to the size of the compact directed acyclic word graph (CDAWG) of $T$, and that supports a number of operations in time between $O(1)$ and $O(\log{\log{n}})$ in the worst case (see Table \ref{tab:suffixTree}). If $T$ is highly repetitive, the size of the CDAWG of $T$ is known to grow sublinearly in the length of $T$ in practice (see e.g. \cite{belazzougui2015composite}). Being related to maximal repeats, the size of the CDAWG is also a natural measure of redundancy for very repetitive strings. Moreover, since the difference between consecutive suffix array positions is the same inside isomorphic subtrees of the suffix tree, and since such isomorphic subtrees are compressed by the CDAWG, the CDAWG itself can be seen as a grammar that produces the differential suffix array, and the suffix tree can be seen as the parse tree of such grammar: this provides a formal substrate to heuristics that grammar-compress the differential suffix array.

In this paper we further exploit the compression of isomorphic subtrees of a suffix tree induced by the CDAWG, augmenting the representation of the suffix tree described in \cite{belazzougui2015composite} with a number of additional operations that take between $O(1)$ and $O(\log{n})$ time in the worst case (see Table \ref{tab:suffixTree2}), without increasing space complexity asymptotically. We also describe CDAWG-based representations of the suffix array, of the inverse suffix array, of the LCP array, and of $T$ itself, with $O(\log{n})$ random access time.

Our approach is related to the work of Bille et al~\cite{bille2015random}, 
in which a straight-line program (effectively a DAG) that produces the balanced parentheses representation of a tree with $n$ nodes, is used to support operations on the topology of the tree in $O(\log{n})$ time. Applying such compression to the suffix 
tree achieves the space bounds of this paper, but it only 
supports operations on the topology of the tree, and it supports each operation in $O(\log{n})$ time, whereas we achieve either constant or $O(\log{\log{n}})$ time for some key primitives. 

\begin{table}[t!]
\centering
\resizebox{\columnwidth}{!}{%
\begin{tabular}{|c|c|c|c|c|c|}
\hline
 & $\mathtt{leftmostLeaf}$ & $\mathtt{selectLeaf}$, $\mathtt{lca}$ & $\SA[i..j]$ & $T[i..j]$ & $\mathtt{depth}$ \\
 & $\mathtt{rightmostLeaf}$ & $\SA[i]$, $\ISA[i]$, $\LCP[i]$  & $\ISA[i..j]$ & & $\mathtt{ancestor}$ \\
 & & $\PLCP[i]$, $T[i]$ & $\LCP[i..j]$ & & $\mathtt{strAncestor}$ \\
\hline
1 & $O(1)$ & $O(\log{n})$ & $O(\log{n}+j-i)$ & $O(\log{n}+\frac{j-i}{\log_{\sigma}{n}})$ & $O(\log{n})$ \\
\hline
2 & $O(1)$ & $O(\log{n})$ & $O(\log{n}+j-i)$ & $O(\log{n}+\frac{j-i}{\log_{\sigma}{n}})$ & \\
\hline
\end{tabular}
}
\caption{Time complexity of the operations on the suffix tree of a string $T$ described in this paper ($n=|T|$).}
\label{tab:suffixTree2}
\end{table}

\begin{table}[b!]
\centering
\resizebox{\columnwidth}{!}{%
\begin{tabular}{|c|c|c|c|c|c|c|}
\hline
& Space & $\mathtt{stringDepth}$ & $\mathtt{isAncestor}$ & $\mathtt{parent}$ & $\mathtt{suffixLink}$ & $\mathtt{weinerLink}$ \\
& (words) & $\mathtt{nLeaves}$, $\mathtt{height}$ & $\mathtt{leafRank}$ & $\mathtt{nextSibling}$ & & \\ 
& & $\mathtt{locateLeaf}$ & & & & \\
& & $\mathtt{firstChild}$, $\mathtt{child}$ & & & & \\
\hline
1 & $O(\newe_T+\newe_{\REV{T}})$ & $O(1)$ & $O(1)$ & $O(\log{\log{n}})$ & $O(\log{\log{n}})$ & $O(\log{\log{n}})$ \\
\hline
2 & $O(\newe_T)$ & $O(1)$ & & $O(\log{\log{n}})$ & $O(1)$ & \\
\hline
\end{tabular}
}
\caption{Complexity of the operations on the suffix tree of a string $T$ described in \cite{belazzougui2015composite} ($n=|T|$).}
\label{tab:suffixTree}
\vspace{-1cm}
\end{table}

\section{Preliminaries}

We work in the RAM model with word length at least $\log{n}$ bits, where $n$ is the length of a string that is implicit from the context, and we index strings and arrays starting from one.

\subsection{Graphs}

We assume the reader to be familiar with the notions of tree and of directed acyclic graph (DAG). By $\mathtt{lca}(u,v)$ we denote the lowest common ancestor of nodes $u$ and $v$ in a tree. By \emph{weighted tree} we mean a tree with nonnegative weights on the edges, and we use $\omega(u,v)$ to denote the weight of edge $(u,v)$. Weighted DAGs are defined similarly. In this paper we only deal with \emph{ordered} trees and DAGs, in which there is a total order among the out-neighbors of every node. The $i$-th leaf of a tree is its $i$-th leaf in depth-first order, and to every node $v$ of a tree we assign the compact interval $[\SP{v}..\EP{v}]$, in depth-first order, of all leaves that belong to the subtree rooted at $v$. In this paper we use the expression DAG also for directed acyclic \emph{multigraphs}, allowing distinct arcs to have the same source and destination nodes. In what follows we consider just DAGs with exactly one source and one sink. 

We denote by $\mathcal{T}(G)$ the tree generated by DAG $G$ with the following recursive procedure: the tree generated by the sink of $G$ consists of a single node; the tree generated by a node $v$ of $G$ that is not the sink, consists of a node whose children are the roots of the subtrees generated by the out-neighbors of $v$ in $G$, taken in order, and connected to their parent by edges whose weight, if any, is identical to the weight of the corresponding arc of $G$. Note that: (1) every node of $\mathcal{T}(G)$ is generated by exactly one node of $G$; (2) a node of $G$ different from the sink generates one or more internal nodes of $\mathcal{T}(G)$, and the subtrees of $\mathcal{T}(G)$ rooted at all such nodes are isomorphic; (3) the sink of $G$ can generate one or more leaves of $\mathcal{T}(G)$; (4) there is a bijection, between the set of root-to-leaf paths in $\mathcal{T}(G)$ and the set of source-to-sink paths in $G$, such that every path $v_1,\dots,v_k$ in $\mathcal{T}(G)$ is mapped to a path $v'_1,\dots,v'_k$ in $G$, and such that $\omega(v_i,v_{i+1})=\omega(v'_i,v'_{i+1})$ for all $i \in [1..k-1]$ if $\mathcal{T}(G)$ is weighted. Symmetrically, given any tree $T$, merging all subtrees with identical topology and edge weights produces a DAG $G$ such that $\mathcal{T}(G)=T$: we denote such DAG by $\mathcal{G}(T)$. Clearly $\mathcal{G}(\mathcal{T}(G))=G$.

Given nodes $v$ and $w$ of $\mathcal{T}(G)$ such that $v$ is an ancestor of $w$, let $\mathtt{nLeaves}(v)$ be the number of leaves in the subtree rooted at $v$, and let $\mathtt{left}(v,w)$ (respectively, $\mathtt{right}(v,w)$) be the number of leaves in the subtree rooted at $v$ that precede (respectively, follow) in depth-first order the leaves in the subtree rooted at $w$. A \emph{heavy path decomposition} of $\mathcal{T}(G)$ \cite{harel1984fast} is the following marking: for every node $u$, we mark exactly one edge $(u,v)$ as \emph{heavy} if $\mathtt{nLeaves}(v)$ is the largest among all children of $u$, with ties broken arbitrarily (Figure \ref{fig:cdawg}a). We call \emph{light} an edge that is not heavy, and we call \emph{heavy path} a maximal sequence of nodes $v_1,\dots,v_k$ such that $(v_i,v_{i+1})$ is heavy for all $i \in [1..k-1]$. Note that $v_k$ is a leaf, every node of $\mathcal{T}(G)$ belongs to exactly one heavy path, distinct heavy paths are connected by light edges, and every path from the root to a leaf contains $O(\log{N})$ light edges, or equivalently intersects $O(\log{N})$ heavy paths, where $N$ is the number of leaves of $\mathcal{T}(G)$. Heavy paths are disjoint in $\mathcal{T}(G)$, but their corresponding paths in $G$ form a spanning tree $\tau(G)$, with $O(n)$ nodes and edges, rooted at the sink of $G$, where $n$ is the number of nodes of $G$ (Figure \ref{fig:cdawg}b).

\subsection{Strings}

Let $\Sigma=[1..\sigma]$ be an integer alphabet, let $\#=0 \notin \Sigma$ be a separator, and let $T \in [1..\sigma]^{n-1}\#$ be a string. Given a string $W \in [1..\sigma]^k$, we call the \emph{reverse of $W$} the string $\REV{W}$ obtained by reading $W$ from right to left. For a string $W \in [1..\sigma]^{k}\#$ we abuse notation, and we denote by $\REV{W}$ the string $\REV{W[1..k]}\#$. Given a substring $W$ of $T$, let $\mathcal{P}_{T}(W)$ be the set of all starting positions of $W$ in the circular version of $T$. A \emph{repeat} $W$ is a string that satisfies $|\mathcal{P}_{T}(W)|>1$. We denote by $\Sigma^{\ell}_{T}(W)$ the set of characters $\{a \in [0..\sigma] : |\mathcal{P}_{T}(aW)|>0\}$ and by $\Sigma^{r}_{T}(W)$ the set of characters $\{b \in [0..\sigma] : |\mathcal{P}_{T}(Wb)|>0\}$. A repeat $W$ is \emph{right-maximal} (respectively, \emph{left-maximal}) iff $|\Sigma^{r}_{T}(W)|>1$ (respectively, iff $|\Sigma^{\ell}_{T}(W)|>1$). It is well known that $T$ can have at most $n-1$ right-maximal repeats and at most $n-1$ left-maximal repeats. A \emph{maximal repeat} of $T$ is a repeat that is both left- and right-maximal. It is also well known that a maximal repeat $W \in [1..\sigma]^m$ of $T$ is the equivalence class of all the right-maximal strings $\{W[1..m],\dots,W[k..m]\}$ such that $W[k+1..m]$ is left-maximal, and $W[i..m]$ is not left-maximal for all $i \in [2..k]$.

For reasons of space we assume the reader to be familiar with the notion of \emph{suffix tree} $\ST_T$ of $T$ (see e.g. \cite{gusfield1997algorithms} for an introduction), which we do not define here. We denote by $\ell(\gamma)$, or equivalently by $\ell(u,v)$, the string label of edge $\gamma=(u,v) \in E$, and we denote by $\ell(v)$ the string label of node $v \in V$. It is well known that a substring $W$ of $T$ is right-maximal iff $W=\ell(v)$ for some internal node $v$ of the suffix tree. We assume the reader to be familiar with the notion of \emph{suffix link} connecting a node $v$ with $\ell(v)=aW$ for some $a \in [0..\sigma]$ to a node $w$ with $\ell(w)=W$. Here we just recall that inverting the direction of all suffix links yields the so-called \emph{explicit Weiner links}. 

Finally, we assume the reader to be familiar with the notion and uses of the Burrows-Wheeler transform of $T$ (see e.g. \cite{ferragina2005indexing}). In this paper we use $\BWT_T$ to denote the BWT of $T$, and we use $\INTERVAL{W} = [\SP{W}..\EP{W}]$ to denote the lexicographic interval of a string $W$ in a BWT that is implicit from the context. As customary, we denote by $C[0..\sigma]$ the array such that $C[a]$ equals the number of occurrences of characters lexicographically smaller than $a$ in $T$. For a node $v$ of $\ST_T$, we use the shortcut $\INTERVAL{v}=[\SP{v}..\EP{v}]$ to denote $\INTERVAL{\ell(v)}$. We say that $\BWT_{T}[i..j]$ is a \emph{run} iff $\BWT_{T}[k]=c \in [0..\sigma]$ for all $k \in [i..j]$, and moreover if any substring $\BWT_{T}[i'..j']$ such that $i' \leq i$, $j' \geq j$, and either $i' \neq i$ or $j' \neq j$, contains at least two distinct characters. We denote by $\mathcal{R}_{T}$ the set of all triplets $(c,i,j)$ such that $\BWT_{T}[i..j]$ is a run of character $c$. Given a string $T \in [1..\sigma]^{n-1}\#$, we call \emph{run-length encoded BWT} ($\RLCSA_T$) any representation of $\BWT_T$ that takes $O(|\mathcal{R}_T|)$ words of space, and that supports the well known rank and select operations: see for example \cite{makinen2005succinct1,MakinenNSV10,SirenVMN08}. It is easy to implement a version of $\RLCSA_T$ that supports rank in $O(\log{\log{n}})$ time and select in $O(\log{\log{n}})$ time~\cite{belazzougui2015composite}.

\subsection{CDAWG}

\begin{figure}[t!]
\centering
\includegraphics[width=0.94\textwidth]{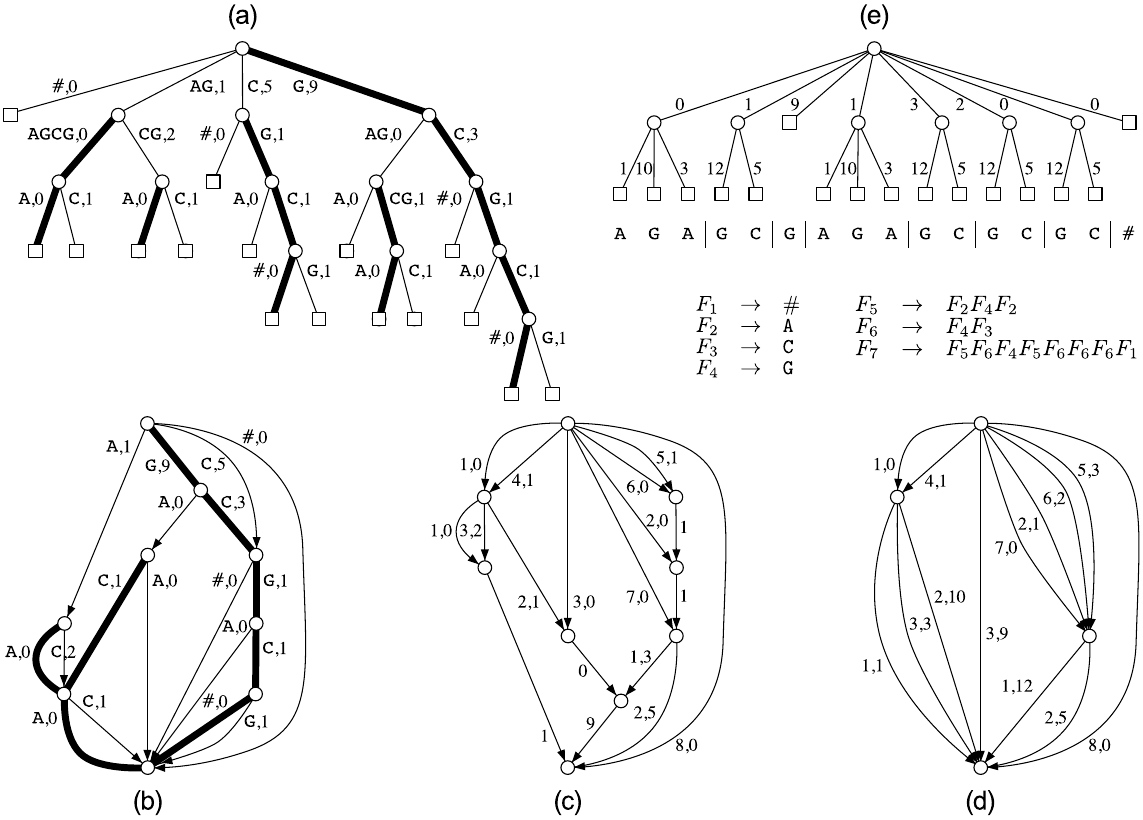}
\caption{The data structures used in this paper for string $T=\mathtt{AGAGCGAGAGCGCGC\#}$. (a) The suffix tree of $T$. Edges to leaves are labelled by just the first character of their string. The weight of edge $(u,v)$ is $\SP{v}-\SP{u}$. Heavy edges according to the number of leaves are bold. (b) The CDAWG of $T$. Just the first character of each arc label is shown. Arc weights are from (a). Arcs in the spanning tree $\tau$ are bold. (c) The reverse CDAWG. Arc $(u,v)$ is labelled by pair $(x,y)$, where $x$ is the order of $v$ among the out-neighbors of $u$, and $y$ is the weight in (b). (d) The compacted version of (c). (e) The weighted tree generated from (d), and the corresponding grammar.}
\label{fig:cdawg}
\end{figure}

The \emph{compact directed acyclic word graph} of a string $T$ (denoted by $\CDAWG_T$ in what follows) is the minimal compact automaton that recognizes the suffixes of $T$ \cite{blumer1987complete,CrochemoreV97}. We denote by $\newe_T$ the number of arcs in $\CDAWG_T$. The CDAWG of $T$ can be seen as the minimization of $\ST_T$, in which all leaves are merged to the same node (the sink) that represents $T$ itself, and in which all nodes except the sink are in one-to-one correspondence with the maximal repeats of $T$ \cite{Raffinot2001}. Every arc of $\CDAWG_T$ is labeled by a substring of $T$, and the out-neighbors $w_1,\dots,w_k$ of every node $v$ of $\CDAWG_T$ are sorted according to the lexicographic order of the distinct labels of arcs $(v,w_1),\dots,(v,w_k)$. Since there is a bijection between the nodes of $\CDAWG_T$ and the maximal repeats of $T$, the node $v'$ of $\CDAWG_T$ with $\ell(v')=W$ is the equivalence class of the nodes $\{v_1,\dots,v_k\}$ of $\ST_T$ such that $\ell(v_i)=W[i..|W|]$ for all $i \in [1..k]$, and such that $v_k,v_{k-1},\dots,v_1$ is a maximal unary path of explicit Weiner links. The subtrees of $\ST_T$ rooted at all such nodes are isomorphic, and $\mathcal{T}(\CDAWG_T)=\ST_T$ (Figure \ref{fig:cdawg}b).
It follows that the set of right-maximal strings that belong to the equivalence class of a maximal repeat can be represented by a single integer $k$, and a right-maximal string can be identified by the maximal repeat $W$ it belongs to, and by the length of the corresponding suffix of $W$. Similarly, a suffix of $T$ can be identified by a length relative to the sink of $\CDAWG_T$.

In $\BWT_T$, the right-maximal strings in the same equivalence class of a maximal repeat enjoy the following properties:

\begin{property}[\cite{belazzougui2015composite}]\label{obs:equivalenceClassInBWT}
Let $\{W[1..m],\dots,W[k..m]\}$ be the right-maximal strings that belong to the equivalence class of maximal repeat $W \in [1..\sigma]^m$ of a string $T$, and let $\INTERVAL{W[i..m]}=[p_i..q_i]$ for $i \in [1..k]$. Then: (1) $|q_i-p_i+1|=|q_j-p_j+1|$ for all $i$ and $j$ in $[1..k]$; (2) $\BWT_{T}[p_i..q_i]=W[i-1]^{q_i-p_i+1}$ for $i \in [2..k]$. Conversely, $\BWT_{T}[p_1..q_1]$ contains at least two distinct characters. (3) $p_{i-1} = C[c]+\mathtt{rank}_{c}(\BWT_T,p_i)$ and $q_{i-1}=p_{i-1}+q_i-p_i$ for $i \in [2..k]$, where $c=W[i-1]=\BWT_{T}[p_i]$. (4) $p_{i+1} = \mathtt{select}_{c}(\BWT_T,p_i-C[c])$ and $q_{i+1} = p_{i+1}+q_i-p_i$ for $i \in [1..k-1]$, where $c=W[i]$ is the character that satisfies $C[c] < p_i \leq C[c+1]$. (5) Let $c \in [0..\sigma]$, and let $\INTERVAL{W[i..m]c}=[x_i..y_i]$ for $i \in [1..k]$. Then, $x_i = p_i+x_1-p_1$ and $y_i = p_i+y_1-p_1$.
\end{property}

Character $c$ in Property \ref{obs:equivalenceClassInBWT}.4 can be computed in $O(\log{\log{n}})$ time using a predecessor data structure that uses $O(\sigma)$ words of space \cite{Wi83}. Moreover, the equivalence class of a maximal repeat is related to the equivalence classes of its in-neighbors in the CDAWG in the following way:

\begin{property}[\cite{belazzougui2015composite}]\label{obs:inNeighbors}
Let $w$ be a node in $\CDAWG_T$ with $\ell(w) = W \in [1..\sigma]^m$, and let $\mathcal{S}_w=\{W[1..m]$, $\dots$, $W[k..m]\}$ be the right-maximal strings that belong to the equivalence class of node $w$. Let $\{v^1,\dots,v^t\}$ be the in-neighbors of $w$ in $\CDAWG_T$, and let $\{V^1,\dots,V^t\}$ be their labels. Then, $\mathcal{S}_w$ is partitioned into $t$ disjoint sets $\mathcal{S}_w^1,\dots,\mathcal{S}_w^t$ such that $\mathcal{S}_w^i = \{W[x^i+1..m],W[x^i+2..m],\dots,W[x^i+|\mathcal{S}_{v^i}|..m]\}$, and the right-maximal string $V^{i}[p..|V^i|]$ labels the parent of the locus of the right-maximal string $W[x^i+p-1..m]$ in $\ST_T$.
\end{property}

Property \ref{obs:inNeighbors} applied to the sink $v$ of $\CDAWG_T$ partitions $T$ into $x$ left-maximal factors, where $x$ is the number of in-neighbors of $v$ (Figure \ref{fig:cdawg}e). Moreover, by Property \ref{obs:inNeighbors}, it is natural to say that in-neighbor $v^i$ of node $w$ is smaller than in-neighbor $v^j$ of node $w$ iff $x^i<x^j$, or equivalently if the strings in $\mathcal{S}^i_w$ are longer than the strings in $\mathcal{S}^j_w$. We call $\REV{\CDAWG}_T$ the ordered DAG obtained by applying this order to the reverse of $\CDAWG_T$, i.e. to the DAG obtained by inverting the direction of all arcs of $\CDAWG_T$ (Figure \ref{fig:cdawg}c). Note that $\REV{\CDAWG}_T$ is not the same as $\CDAWG_{\REV{T}}$, although there is a bijection between their sets of nodes. Note also that some nodes of $\REV{\CDAWG}_T$ can have just one out-neighbor: for brevity we denote by $\REV{\CDAWG}_T$ the graph obtained by collapsing every such node $v$, i.e. by adding the weight (if any) of the only outgoing arc from $v$ to the weights of all incoming arcs to $v$, and by redirecting such incoming arcs to the out-neighbor of $v$ (Figure \ref{fig:cdawg}d). This can be done in linear time by an inverse topological sort of $\REV{\CDAWG}_T$ that starts from its sink.

The source of $\REV{\CDAWG}_T$ is the sink of $\CDAWG_T$, which is the equivalence class of all suffixes of $T$ in string order, and there is a bijection between the distinct paths of $\REV{\CDAWG}_T$ and the suffixes of $T$. It follows that:

\begin{property}
The $i$-th leaf of $\mathcal{T}(\REV{\CDAWG}_T)$ in depth-first order corresponds to the $i$-th suffix of $T$ in string order.
\end{property}

Thus, $\mathcal{T}(\REV{\CDAWG}_T)$ can be seen as the parse tree of a context-free grammar that generates $T$ and only $T$, and $\REV{\CDAWG}_T$ can be seen as such grammar (Figure \ref{fig:cdawg}e). This implies a lower bound on the size of the CDAWG:

\begin{lemma}
Let $f$ be the function that maps the length of a string to the size of its CDAWG, and let $g$ be the function that maps the length of a string $T$ to the size of the smallest grammar that produces $T$ and only $T$. Then, $f \in \Omega(g)$.
\end{lemma}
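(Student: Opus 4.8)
The plan is to read the bound off directly from the grammar interpretation developed just above the statement. First I would make the grammar $\mathcal{G}$ fully explicit: it has one nonterminal per node of $\REV{\CDAWG}_T$; its start symbol is the nonterminal of the source of $\REV{\CDAWG}_T$ (equivalently, of the sink of $\CDAWG_T$); the right-hand side of the rule of a node $v$ lists the nonterminals of the out-neighbors of $v$ in order, with terminal characters and arc weights attached exactly as in Figure~\ref{fig:cdawg}(e); and the sink of $\REV{\CDAWG}_T$ supplies the terminal emitted by each generated leaf. By the construction of $\mathcal{T}(\cdot)$, the unique derivation tree of $\mathcal{G}$ is $\mathcal{T}(\REV{\CDAWG}_T)$, whose $i$-th leaf corresponds to the $i$-th suffix of $T$ in string order (the Property stated just above); hence reading the leaves left to right spells $T$, and since every derivation step is forced, $\mathcal{G}$ generates $T$ and only $T$.

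Next I would bound the size of $\mathcal{G}$. The graph $\REV{\CDAWG}_T$ is obtained from $\CDAWG_T$ by reversing every arc and then collapsing each node of out-degree one; reversal preserves the number of arcs and each collapse removes one arc, so $\REV{\CDAWG}_T$ has at most $\newe_T$ arcs. The right-hand side of the rule of a node $v$ then consists of $O(\mathrm{outdeg}(v))$ nonterminal symbols and, as I would argue, $O(\mathrm{outdeg}(v))$ terminals and weights, so the total size of $\mathcal{G}$ is $O\bigl(\sum_v \mathrm{outdeg}(v)\bigr) = O(\newe_T)$. Consequently every string $T$ of length $n$ admits a context-free grammar generating $T$ and only $T$ of size $O(\newe_T)$, where $\newe_T$ is the size of $\CDAWG_T$; equivalently, there is a universal constant $c>0$ such that $\mathrm{size}(\CDAWG_T) \ge c\cdot g(|T|)$ for every $T$, and taking the maximum over all strings of a given length this yields $f \in \Omega(g)$.

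The \emph{main obstacle} is precisely this size bound, i.e.\ checking that the terminal information the grammar must carry really is $O(\newe_T)$ in total: one has to verify that each arc of $\REV{\CDAWG}_T$ contributes only $O(1)$ terminal symbols and weights to the right-hand sides — and in particular that the grammar never needs to spell out the full, possibly $\Theta(n)$-long, label of an arc of $\CDAWG_T$ — so that the count stays proportional to the number of arcs. A minor formal point to settle is the phrase ``the function that maps the length of a string to the size of its CDAWG'', which is not a function of the length alone; the clean statement is the per-string inequality above, from which the $\Omega$-relation between $f$ and $g$ follows in either its per-string or its worst-case reading.
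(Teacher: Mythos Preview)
Your proposal is correct and follows exactly the route the paper takes: the lemma is presented in the paper as an immediate consequence of the observation that $\REV{\CDAWG}_T$ is a context-free grammar generating $T$ and only $T$, with $\mathcal{T}(\REV{\CDAWG}_T)$ as its parse tree, and the paper gives no further argument beyond the sentence ``This implies a lower bound on the size of the CDAWG.'' Your write-up simply makes explicit the grammar-size count and the per-string reading of $f\in\Omega(g)$ that the paper leaves implicit; the ``main obstacle'' you flag (that each arc contributes $O(1)$ terminals) is handled in the paper by storing only the first character of each root arc of $\CDAWG_T$ and propagating it during compaction, exactly as you anticipate.
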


In some classes of strings the size of the CDAWG is asymptotically the same as the size of the smallest grammar that produces the string, but in other classes the ratio between the two sizes reaches its maximum, $O(n/\log{n})$: see Section 2.1 in \cite{belazzougui2015composite}. 

Let $G$ be an ordered DAG, let $\gamma=(v,w)$ be an edge of $\mathcal{T}(G)$, and assume that we assign to $\gamma$ a weight equal to the offset $\SP{w}-\SP{v}$ between the first leaf in the leaf interval of $w$ and the first leaf in the leaf interval of $v$ (Figure \ref{fig:cdawg}a). Thus, we can compute the depth-first order of a leaf of $\mathcal{T}(G)$ by summing the weights of all edges in its root-to-leaf path. Note that edges $(v,w)$ and $(v',w')$ in $\mathcal{T}$ such that $v$ and $v'$ correspond to the same node $v''$ in $G$, and such that $w$ and $w'$ correspond to the same node $w''$ in $G$, have the same weight: in the case of $\CDAWG_T$ and $\ST_T$, this is equivalent to Property \ref{obs:equivalenceClassInBWT}.5, and weights are offsets between the starting positions of nested BWT intervals (Figure \ref{fig:cdawg}b). Assume that every such weight is stored inside arc $(v'',w'')$ of $\CDAWG_T$, and that weights are preserved when building $\REV{\CDAWG}_T$. Then, one plus the sum of all weights in the source-to-sink path of $\REV{\CDAWG}_T$ that corresponds to suffix $T[i..|T|]$ is the lexicographic rank of suffix $T[i..|T|]$ (see e.g. Figures \ref{fig:cdawg}d and \ref{fig:cdawg}e). Equivalently:

\begin{property}\label{property:revCDAWG}
Let arc $(u,v)$ of $\CDAWG_T$ be weighted by $\SP{v'}-\SP{u'}$, where $v'$ (respectively, $u'$) is a node of $\ST_T$ that belongs to the equivalence class of $v$ (respectively, $u$), and $v'$ is a child of $u'$ in $\ST_T$. Then, the lexicographic rank of suffix $T[i..|T|]$ is one plus the sum of all weights in the path from the root of $\mathcal{T}(\REV{\CDAWG}_T)$ to the $i$-th leaf of $\mathcal{T}(\REV{\CDAWG}_T)$ in depth-first order.
\end{property}

\subsection{Representing the suffix tree with the CDAWG}\label{sec:cpm2015}

It is known that Properties \ref{obs:equivalenceClassInBWT} and \ref{obs:inNeighbors} enable two encodings of $\ST_T$ that take $O(\newe_T+\newe_{\REV{T}})$ words of space each, and that support the operations in Table \ref{tab:suffixTree} with the specified time complexities \cite{belazzougui2015composite}. Since the rest of this paper builds on the representation described in \cite{belazzougui2015composite}, we summarize it here for completeness.

It is known that $|\mathcal{R}_T|$ is at most the number of arcs in $\CDAWG_T$ \cite{belazzougui2015composite}, thus augmenting $\CDAWG_T$ with $\RLCSA_T$ does not increase space asymptotically. For every node $v$ of $\CDAWG_T$, we store: $|\ell(v)|$ in a variable $v\mathtt{.length}$; the number $v\mathtt{.size}$ of right-maximal strings that belong to its equivalence class; the interval $[v\mathtt{.first}..v\mathtt{.last}]$ of $\ell(v)$ in $\BWT_T$; a linear-space predecessor data structure \cite{Wi83} on the boundaries induced on the equivalence class of $v$ by its in-neighbors (Property \ref{obs:inNeighbors}); and pointers to the in-neighbor that corresponds to the interval associated with each boundary. For every arc $\gamma=(v,w)$ of $\CDAWG_T$, we store the first character of $\ell(\gamma)$ in a variable $\gamma\mathtt{.char}$, and the number of characters of the right-extension implied by $\gamma$ in a variable $\gamma\mathtt{.right}$. We also add to the CDAWG all arcs $(v,w,c)$ such that $w$ is the equivalence class of the destination of a Weiner link from $v$ labeled by character $c$ in $\ST_T$, and the reverse of all explicit Weiner link arcs. We represent a node $v$ of $\ST_T$ as a tuple $\mathtt{id}(v)=(v',|\ell(v)|,i,j)$, where $v'$ is the node in $\CDAWG_T$ that corresponds to the equivalence class of $v$, and $[i..j]$ is the interval of $\ell(v)$ in $\BWT_T$. Implementing operations $\mathtt{stringDepth}(\mathtt{id}(v))$, $\mathtt{nLeaves}(\mathtt{id}(v))$ (which returns the number of leaves of the subtree of $\ST_T$ rooted at a given node), $\mathtt{isAncestor}(\mathtt{id}(v),\mathtt{id}(w))$ (which returns true iff a node $v$ of $\ST_T$ is an ancestor of another node $w$ of $\ST_T$), $\mathtt{suffixLink}(\mathtt{id}(v))$, $\mathtt{weinerLink}(\mathtt{id}(v))$, $\mathtt{locateLeaf}(\mathtt{id}(v))$ (which returns the position in $T$ of a leaf $v$ of $\ST_T$) and $\mathtt{leafRank}(\mathtt{id}(v))$ (which returns the position of a leaf $v$ of $\ST_T$ in lexicographic order) is straightforward using Properties \ref{obs:equivalenceClassInBWT}.3 and \ref{obs:equivalenceClassInBWT}.4, and implementing $\mathtt{parent}(\mathtt{id}(v))$, $\mathtt{child}(\mathtt{id}(v))$ and $\mathtt{nextSibling}(\mathtt{id}(v))$ is easy using Properties \ref{obs:inNeighbors} and \ref{obs:equivalenceClassInBWT}.5. 

Removing all implicit Weiner link arcs from our data structure achieves $O(\newe_T)$ words of space, and still supports all queries except following implicit Weiner links. We can further drop $\RLCSA_T$ and remove from $\mathtt{id}(v)$ the interval of $\ell(v)$ in $\BWT_T$, still supporting most of the original queries in the same amount of time, and $\mathtt{suffixLink}$ in constant time. The data structure after such removals corresponds to the second row of Table \ref{tab:suffixTree}. Conversely, storing also the RLBWT of $\REV{T}$, and the interval in such RLBWT of the reverse of the maximal repeat that corresponds to every node of the CDAWG, allows one to also read the label of an edge $\gamma$ of $\ST_T$ in $O(\log{\log{n}})$ time per character, for the same asymptotic space complexity.

\section{Additional suffix tree operations}

In this paper we augment the representation of the suffix tree described in Section \ref{sec:cpm2015}, enabling it to support a number of additional suffix tree operations in $O(\log{n})$ time without increasing space complexity asymptotically. At the core of our methods lies a heavy path decomposition of $\CDAWG_T$ along the lines of \cite{bille2015random}, which we summarize in what follows to keep the paper self-contained.

\begin{definition}[Smooth function]
Let $T$ be a tree, let $v_1,v_2,\dots,v_N$ be its $N$ leaves in depth-first order, let $f$ be a function that assigns a real number to every leaf, and let $F[1..N]$ be the array that stores at position $i$ the value of $f(v_i)$. We say that $f$ is \emph{smooth} \emph{with respect to} $T$ iff $F[\SP{v}..\EP{v}]=F[\SP{w}..\EP{w}]$ for every pair of internal nodes $v,w$ of $T$ that are generated by the same node of $\mathcal{G}(T)$.
\end{definition}

For example, let $T$ be the parse tree of a string $S$ generated by a context-free grammar: the function that assigns character $T[i]$ to every position $i$ of $T$ is smooth.

\begin{lemma}[\cite{bille2015random}]\label{lemma:bille}
Let $G$ be a DAG with $n$ arcs such that every node has exactly two out-neighbors, let $f$ be a smooth function with respect to $\mathcal{T}(G)$, and let $N$ be the number of leaves of $\mathcal{T}(G)$. There is a data structure that, given a number $i \in [1..N]$, returns $f(u_i)$ in $O(\log{N})$ time, where $u_i$ is the $i$-th leaf of $\mathcal{T}(G)$ in depth-first order. Moreover, given two integers $1 \leq i \leq j \leq N$, the data structure returns in $O(\log{N})$ time the node of $G$ that corresponds to $\mathtt{lca}(u_i,u_j)$, and it returns in $O(\log{N}+j-i)$ time the sequence of values $f(u_i),f(u_{i+1}),\dots,f(u_j)$, where $u_h$ is the $h$-th leaf of $\mathcal{T}(G)$ in depth-first order. Such data structure takes $O(n)$ words of space. 
\end{lemma}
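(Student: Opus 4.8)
The plan is to reuse, and lightly extend, the random-access machinery of Bille et al.~\cite{bille2015random}. The first observation is that a smooth $F$ is itself ``generated'' by $G$: for every node $u$ of $G$ other than the sink, all internal nodes of $\mathcal{T}(G)$ generated by $u$ carry the same sub-array $F[\SP{v}..\EP{v}]$, so it can be named by $u$; and whenever an out-arc of $u$ points to the sink, the single leaf it produces there has a value that is likewise determined by $u$, hence can be stored on that arc. Thus $F$ is described by $G$, the numbers $\mathtt{nLeaves}$ of descendant leaves, and $O(n)$ leaf values, i.e.\ $O(n)$ words, and evaluating $f$ at a leaf reduces to identifying, during a top-down descent in $\mathcal{T}(G)$, the arc into the sink through which that leaf is reached.

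Next I would set up the heavy path decomposition of $\mathcal{T}(G)$ exactly as in the Preliminaries, so that the heavy arcs of $G$ form the spanning in-tree $\tau(G)$ rooted at the sink, and decorate it so that descents are fast. For query~1, maintain a node $v$ of $G$ together with the index $i'$ of the target leaf within the leaf interval $[1..W]$ of the subtree of $\mathcal{T}(G)$ generated by $v$. Walking down the heavy path from $v$ in $\tau(G)$, the leaves of that subtree are laid out as a nested sequence of \emph{light blocks} --- the subtrees rooted at the non-heavy children met along the path --- surrounding the single leaf at the bottom of the path; so $i'$ pins down either that bottom leaf (we are done, and the value is read from the incident arc) or one light block, into whose root we recurse with a shifted index. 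Each recursion crosses a light edge of $\mathcal{T}(G)$, of which a root-to-leaf path has only $O(\log N)$. To resolve one step I would, as in~\cite{bille2015random}, equip $\tau(G)$ with a weight-biased search structure that, given a node $v$ of $\tau(G)$ and a threshold, returns the first ancestor of $v$ at which the running sum of (left- or right-) light-block sizes crosses the threshold, in $O(1+\log(W/w))$ time, where $W$ is the size of the subtree entered at the top of the step and $w$ that of the light block exited into. Since $w$ is also the size of the subtree entered at the next step and there are $O(\log N)$ steps, the per-step costs telescope to $O(\log N)$ overall.

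For query~2, run the same descent while tracking the target sub-interval $[i'..j']$ instead of a single index: at each heavy path, perform the resolution of query~1 once for $i'$ and once for $j'$. If both resolve to the same light block and lie strictly inside it, recurse into that block; otherwise the two targets separate on the current heavy path --- one hits the bottom leaf, or they land in different light blocks, or one exits while the other stays on the path --- and an $O(1)$ case analysis on the two exit positions identifies the node of $\mathcal{T}(G)$ that is their lowest common ancestor, whose generating node of $G$ (known from the descent) is reported; the cost is again $O(\log N)$. For query~3, first locate $z=\mathtt{lca}(u_i,u_j)$ in $O(\log N)$ time; then enumerate $u_i,\dots,u_j$ by simultaneously following, via the heavy-path descent above, the path from $z$ to $u_i$ and the path from $z$ to $u_j$, while emitting in full every subtree that hangs off these two paths on the side facing the other path (such a subtree is entirely inside $[i..j]$, and a traversed subtree of $\mathcal{T}(G)$ with $L$ leaves has $O(L)$ nodes since $\mathcal{T}(G)$ is binary). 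The two descents cost $O(\log N)$ by the telescoping argument, the emitted subtrees cost $O(j-i)$ in total, and each emitted value is again read off the sink-arc through which its leaf is reached; hence $O(\log N + (j-i))$.

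The main obstacle is the last sentence of the second paragraph made rigorous: turning the naive ``$O(\log N)$ light steps, each a weighted-ancestor query of cost up to $\Theta(\log n)$'' bound into a true $O(\log N)$ via the weight-biased search structure on $\tau(G)$ and the telescoping argument --- this is the technical heart of~\cite{bille2015random}. Everything else, in particular the $\mathtt{lca}$ query, is then a modest addition built on the same descent, and the $O(n)$ space bound is immediate since we store $O(1)$ words per node and arc of $G$ plus search structures on the $O(n)$-size tree $\tau(G)$.
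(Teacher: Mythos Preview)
Your proposal is correct and follows essentially the same route as the paper's proof sketch: both use the heavy path decomposition of $\mathcal{T}(G)$, observe that heavy paths collapse to the spanning tree $\tau(G)$ of $G$, reduce the per-step resolution to a weighted-ancestor-type query on $\tau(G)$, and invoke the machinery of~\cite{bille2015random} to turn the naive $O(\log N)$-many $\Theta(\log n)$-cost steps into an overall $O(\log N)$ bound via telescoping. The only cosmetic differences are in presentation: you store the leaf values on the arcs into the sink, while the paper stores $f(v_k)$ at the head $v_1$ of each heavy path (equivalent, since every heavy path ends at the sink); for $\mathtt{lca}$ you do a case analysis at the heavy path where the two searches diverge, while the paper continues both searches and compares $\mathtt{nLeaves}$ of the two resulting $G$-nodes; and for range extraction you emit the inner-facing subtrees during the two descents, while the paper precomputes auxiliary left/right pointers inside each heavy path to chain the light children into linked lists --- same complexity, slightly different bookkeeping.
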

\begin{proof}[Proof sketch]
For each heavy path $v_1,\dots,v_k$ of $\mathcal{T}(G)$, we store at $v_1$ values $\mathtt{nLeaves}(v_1)$, $\mathtt{left}(v_1,v_k)$, $f(v_k)$, a predecessor data structure on the set of values $\{\mathtt{left}(v_1,v_i) : i \in [2..k]\}$, and a predecessor data structure on the set of values $\{\mathtt{right}(v_1,v_i) : i \in [2..k]\}$. If we query $v_1$ with the position $i_1$ of a leaf in the subtree rooted at $v_1$, such data structures allow us to detect the largest $j \in [1..k]$ such that $v_j$ is an ancestor of the query leaf. If $j=k$ we return $f(v_k)$, otherwise we take the light edge $(v_j,w)$ and we recur on $w$, which is itself the first node of a heavy path. This solution takes $O(\log{N})$ queries to prefix-sum data structures, but the total size of all prefix-sum data structures can be $O(N^2)$.

Note that a predecessor query on the left and right predecessor data structures stored at the first node $v_1$ of a heavy path of $\mathcal{T}(G)$ can be implemented with a \emph{weighted ancestor} query\footnote{A \emph{weighted ancestor query} $(v,k)$ on a tree with weights on the edges asks for the lowest ancestor $u$ of a node $v$ such that the sum of weights in the path from $u$ to $v$ is at least $k$ \cite{Amir:2007:DTS:1240233.1240242}.} on $\tau(G)$, if we assign to each arc $(v,w)$ of $G$ that also belongs to $\tau(G)$ a left weight equal to zero if $w$ is the left successor of $v$, and equal to the number of leaves in the left successor of $v$ otherwise (the right weight is defined similarly). Using a suitable data structure for weighted ancestor queries allows one to achieve $O(n)$ words of space and overall $O(\log{N} \cdot \log{\log{N}})$ query time after $O(n)$ preprocessing of $G$. More advanced data structures that implement weighted ancestor queries on $\tau(G)$ allow one to achieve the claimed bounds \cite{bille2015random}.

Given $\mathcal{T}(G)$, we proceed as follows to extract the values of all leaves in a depth-first interval $[i..j]$. Inside every node $v$ of a heavy path, we store an auxiliary right pointer to the closest descendant of $v$ in the heavy path whose right child is light. We symmetrically store an auxiliary left pointer. Then, we traverse $\mathcal{T}(G)$ top-down as described above, but searching for both the $i$-th leaf $u_i$ and the $j$-th leaf $u_j$ at the same time: when the nodes $w$ and $w'$ of $G$ that result from such searches are different, we know that one is a descendant of the other in $\tau(G)$, and the node of $G$ that corresponds to $\mathtt{lca}(u_i,u_j)$ in $\mathcal{T}(G)$ is the one whose number of leaves equals $\max\{\mathtt{nLeaves}(w),\mathtt{nLeaves}(w')\}$. Then we continue the search for the two leaves separately: during the search for $u_i$ (respectively, $u_j$) we follow all right (respectively, left) auxiliary pointers in all heavy paths, and we concatenate the corresponding nodes in a left (respectively, right) linked list. The size of such lists is $O(j-i)$, and computing sequence $f(u_i),\dots,f(u_j)$ from the lists takes $O(j-i)$ time. The same approach can be applied to $G$, at the cost of $O(n)$ preprocessing time and space.
\end{proof}

Since a node $v$ of $\mathcal{T}(G)$ can be uniquely identified by an interval of leaves in depth-first order, Lemma \ref{lemma:bille} effectively implements a map from the identifier of a node in $\mathcal{T}(G)$ to the identifier of its corresponding node in $G$. 

\begin{lemma}\label{lemma:expansion}
Lemma \ref{lemma:bille} holds also for a DAG in which all nodes have out-degree \emph{at least} two.
\end{lemma}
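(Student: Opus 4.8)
The plan is to reduce the out-degree-$\geq 2$ case to the out-degree-exactly-$2$ case of Lemma~\ref{lemma:bille} by binarizing $G$, while preserving both the generated tree $\mathcal{T}(G)$ (as a collection of leaves in depth-first order, with their $f$-values) and the asymptotic size bound $O(n)$. First I would describe the transformation: for every node $v$ of $G$ with out-neighbors $w_1,\dots,w_d$ in order, where $d \geq 3$, replace $v$ by a path of $d-1$ fresh auxiliary nodes $v = v^{(1)}, v^{(2)}, \dots, v^{(d-1)}$, where $v^{(i)}$ has left out-neighbor $w_i$ and right out-neighbor $v^{(i+1)}$ for $i \in [1..d-2]$, and $v^{(d-1)}$ has left out-neighbor $w_{d-1}$ and right out-neighbor $w_d$. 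Arc weights (if $G$ is weighted) are carried over: the arc from $v^{(i)}$ to $w_i$ inherits the weight of the original arc $(v,w_i)$, while the internal ``spine'' arcs $v^{(i)} \to v^{(i+1)}$ get weight zero. Nodes of out-degree $2$ are left untouched, and the sink is unchanged. Call the resulting DAG $G'$. Since $G$ has $n$ arcs and each original node of out-degree $d$ contributes $d-1 \leq n$ new nodes and $2(d-1) \leq 2n$ arcs, summing over all nodes gives that $G'$ has $O(n)$ arcs, so the space and preprocessing bounds of Lemma~\ref{lemma:bille} applied to $G'$ are still $O(n)$.

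Next I would verify that $\mathcal{T}(G')$ has the same leaf sequence as $\mathcal{T}(G)$ and that $f$ remains smooth. Here the key observation is about the recursive unfolding $\mathcal{T}(\cdot)$: expanding a node $v$ of $G$ with ordered children $w_1,\dots,w_d$ produces, in $\mathcal{T}(G)$, an internal node whose ordered subtrees are $\mathcal{T}$ of the $w_i$; expanding the spine $v^{(1)},\dots,v^{(d-1)}$ in $G'$ produces a right-leaning binary caterpillar whose leaves-in-order, read left to right, are exactly $\mathcal{T}(w_1),\dots,\mathcal{T}(w_d)$ in the same order. Hence there is a bijection between the leaves of $\mathcal{T}(G)$ and those of $\mathcal{T}(G')$ that preserves depth-first order, and the $i$-th leaf of one corresponds to the $i$-th leaf of the other; assigning $f$ to the $i$-th leaf of $\mathcal{T}(G')$ via this bijection keeps $F[1..N]$ identical. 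Smoothness of $f$ with respect to $\mathcal{T}(G')$ then follows because two internal nodes of $\mathcal{T}(G')$ generated by the same node of $G'$ either both come from an original (unmodified) node of $G$ — in which case smoothness is inherited from $\mathcal{T}(G)$ — or both come from the same auxiliary spine node $v^{(i)}$, and in that case their leaf intervals are concatenations of the same blocks (the subtrees $\mathcal{T}(w_i),\dots,\mathcal{T}(w_d)$ at copies of the same $G$-nodes), so the corresponding slices of $F$ agree by the smoothness of $f$ on those sub-blocks.

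It then remains to translate the three outputs of Lemma~\ref{lemma:bille} back from $G'$ to $G$. For the single-leaf query $f(u_i)$ this is immediate, since the leaf correspondence is the identity on depth-first indices. For the $\mathtt{lca}$ query on $(u_i,u_j)$, the data structure on $G'$ returns a node $x$ of $G'$; if $x$ is an original node of $G$ we return it, and if $x = v^{(i)}$ is an auxiliary spine node we return the original node $v$ from which the spine was created (which we store a back-pointer for), because $\mathtt{lca}_{\mathcal{T}(G)}(u_i,u_j)$ is precisely the node of $G$ unfolded at the branching point that in $\mathcal{T}(G')$ sits on the corresponding caterpillar. For the range query $f(u_i),\dots,f(u_j)$ nothing needs translating, as the values are reported directly. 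The main obstacle, and the only place that requires care, is exactly this last step of the $\mathtt{lca}$ translation: one must check that the node of $G'$ returned by the weighted-ancestor machinery, when it falls inside a spine, maps back to the correct branching node of $G$ rather than to a spurious ancestor — but this follows from the explicit structure of the caterpillar, since every spine node of $G'$ generated by $v^{(i)}$ corresponds in $\mathcal{T}(G')$ to an internal node whose subtree has the same leaf set as the subtree of the matching $v$-copy in $\mathcal{T}(G)$. With this correspondence in hand, all stated time bounds carry over verbatim.
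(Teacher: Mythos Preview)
Your proposal is correct and follows essentially the same approach as the paper: binarize each high-degree node into a small binary gadget (the paper says ``a binary directed tree with $d-1$ artificial internal nodes,'' you pick the specific right-leaning caterpillar instance), observe that this keeps the DAG size $O(n)$ and preserves the leaf sequence and smoothness of $f$, and store a back-pointer $v^{(i)}\!\to v$ (the paper's $w.\mathtt{real}=v$) to translate $\mathtt{lca}$ answers that land on artificial nodes. Your write-up is more detailed on the smoothness verification and the output translation, but the underlying idea is the same.
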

\begin{proof}
We expand every node $v$ with out-degree $d>2$ into a binary directed tree, with $d-1$ artificial internal nodes, whose $d$ leaves are the out-neighbors of $v$ in $G$. We also store in each artificial internal node $w$ a pointer $w.\mathtt{real}=v$. The size of such expanded DAG $G'$ is still $O(n)$, where $n$ is the number of arcs of $G$, $\mathcal{T}(G')$ is a binary tree with the same number of leaves as $\mathcal{T}(G)$, there is a bijection between the leaves of $\mathcal{T}(G)$ and the leaves of $\mathcal{T}(G')$ such that the $i$-th leaf in depth-first order in $\mathcal{T}(G)$ corresponds to the $i$-th leaf in depth-first order in $\mathcal{T}(G')$, and the extension of $f$ to the leaves of $\mathcal{T}(G')$ induced by such bijection is smooth with respect to $\mathcal{T}(G')$. Note that, if Lemma \ref{lemma:bille} returns an artificial node $w$ as the result of a lowest common ancestor query, it suffices to return $w.\mathtt{real}$ instead.
\end{proof}

Lemma \ref{lemma:bille} can be adapted to support queries on another class of functions:

\begin{definition}[Telescoping function]
Let $f$ be a function that assigns a real number to any path of any weighted graph. We say that $f$ is \emph{telescoping} iff:
\begin{enumerate}
\item Given a path $P=v_1,v_2,\dots,v_k$, $f(P) = g(\omega(v_1,v_2)) \circ \cdots \circ g(\omega(v_{k-1},v_k))$, where $\omega(v_i,v_j)$ is the weight of edge or arc $(v_i,v_j)$, $g$ is a function that can be computed in constant time, and $x \circ y$ is a binary associative operator with identity element $\mathbb{I}$ that can be computed in constant time.
\item $f(v_1,\dots,v_k) \geq f(v_1,\dots,v_{i})$ for all $i<k$, and $f(v_1,\dots,v_k) \geq f(v_i,\dots,v_k)$ for all $i>1$.
\item For every path $v_1,\dots,v_i,\dots,v_j,\dots,v_k$, $f(v_i,\dots,v_j)$ can be computed in constant time given $f(v_1,\dots,v_i)$ and $f(v_1,\dots,v_j)$, or given $f(v_i,\dots,v_k)$ and $f(v_j,\dots,v_k)$.
\end{enumerate}
\end{definition}

We call $y$ the \emph{inverse} of $x$ with respect to $\circ$ iff $x \circ y = y \circ x = \mathbb{I}$. For example, the sum of edge weights in a path is telescoping, $\mathbb{I}=0$, and the inverse of $x$ is $-x$. Note that a telescoping function is not necessarily smooth.

\begin{lemma}\label{lemma:DAG_sum_weights}
Let $G$ be a weighted DAG with $n$ arcs in which every node has at least two out-neighbors, let $f$ be a telescoping function, and let $N$ be the number of leaves of $\mathcal{T}(G)$. There is a data structure that, given a number $i \in [1..N]$, evaluates $f$ in $O(\log{N})$ time on the path from the root of $\mathcal{T}(G)$ to the $i$-th leaf in depth-first order. Moreover, given two numbers $1 \leq i \leq j \leq N$, the data structure: 
\begin{enumerate}
\item Evaluates $f$ in $O(\log{N})$ time on the path from the root of $\mathcal{T}(G)$ to $\mathtt{lca}(u_i,u_j)$, where $u_i$ and $u_j$ are the $i$-th and $j$-th leaf of $\mathcal{T}(G)$ in depth-first order.
\item Returns in $O(\log{N}+j-i)$ time the sequence of values $f(u_i),f(u_{i+1}),\dots,f(u_j)$, where $f(u_h)$ is the value of function $f$ evaluated on the path from the root of $\mathcal{T}(G)$ to the $h$-th leaf in depth-first order.
\item If $[i..j]$ is the identifier of node $v$ in $\mathcal{T}(G)$, given a nonnegative number $k$, returns in $O(\log{N})$ time the node of $G$ that corresponds to the highest ancestor $w$ of $v$ in $\mathcal{T}(G)$ such that $f$, evaluated on the path from the root of $\mathcal{T}(G)$ to $w$, is at least $k$ \emph{(weighted ancestor query)}.
\end{enumerate}
Such data structure takes $O(n)$ words of space. 
\end{lemma}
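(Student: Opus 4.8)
The plan is to follow the proof of Lemma~\ref{lemma:bille}, combined with the binarization of Lemma~\ref{lemma:expansion} (where we additionally declare that every artificial arc contributes the identity $\mathbb{I}$ to $f$, so that $f$ evaluated on a path of the binarized DAG equals $f$ evaluated on the corresponding path of $G$), and to augment it with the bookkeeping needed to carry accumulated values of $f$. Recall that the images in $G$ of the heavy paths of $\mathcal{T}(G)$ form a spanning tree $\tau(G)$ with $O(n)$ nodes, rooted at the sink, and that a heavy path $v_1,\dots,v_k$ of $\mathcal{T}(G)$ (from its head $v_1$ down to the leaf $v_k$) maps to the path of $\tau(G)$ from the node $v_1'$ representing $v_1$ up to the root; hence $v_i'$ is an ancestor of $v_1'$ in $\tau(G)$ for every $i \le k$. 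Besides the structures of Lemma~\ref{lemma:bille}, during preprocessing we store the weight $\omega(v,w)$ inside every arc $(v,w)$ of $G$; we compute, for every node $v'$ of $\tau(G)$, the value $\Phi(v') = f(v',\dots,\mathtt{root}(\tau(G)))$, in $O(n)$ total time by a traversal of $\tau(G)$ from its root using part~1 of the definition of telescoping function; and we build a linear-space data structure for level-ancestor queries on $\tau(G)$. Everything takes $O(n)$ words of space. The key observation is that, for any node $v_i$ on a heavy path with head $v_1$, the value $f(v_1,\dots,v_i)$ equals $f$ evaluated on the path of $\tau(G)$ from $v_1'$ to $v_i'$ (edge weights are preserved), and hence can be computed in constant time from $\Phi(v_1')$ and $\Phi(v_i')$ by part~3 of the definition, since $v_i'$ lies on the path from $v_1'$ to the root of $\tau(G)$; the same holds for any sub-path of a heavy path.

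To evaluate $f$ on the path from the root of $\mathcal{T}(G)$ to the $i$-th leaf $u_i$, we run the top-down traversal of Lemma~\ref{lemma:bille}, which crosses $O(\log N)$ heavy paths, while maintaining an accumulator $a$, initialized to $\mathbb{I}$, that holds $f$ of the path from the root to the head of the current heavy path. Each time we detect the exit node $v_j$ of a heavy path with head $v_1$ (as in Lemma~\ref{lemma:bille}) and follow the light arc $(v_j,x)$ to the next head, we set $a \leftarrow a \circ f(v_1,\dots,v_j) \circ g(\omega(v_j,x))$, computing $f(v_1,\dots,v_j)$ in constant time from $\Phi$; on the last heavy path we similarly compose $f(v_1,\dots,u_i)$ into $a$ and return $a$. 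This adds only constant work per heavy path, so the total time is $O(\log N)$. Part~1 is identical, except that we run the simultaneous search for $u_i$ and $u_j$ from Lemma~\ref{lemma:bille} and accumulate $f$ until the two searches first branch, which happens at $\mathtt{lca}(u_i,u_j)$; the accumulator then holds $f$ of the path from the root to $\mathtt{lca}(u_i,u_j)$.

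For part~3, we traverse $\mathcal{T}(G)$ from the root towards $v$ (equivalently, towards leaf $\SP{v}$, stopping before we pass $v$), maintaining the accumulator as above and checking, after each heavy-path segment and each light arc, whether the accumulator has reached value at least $k$. Since $f$ evaluated on a root-to-node path is non-decreasing as the node descends (part~2 of the definition), this happens for the first time either at the head of some heavy path, detected in constant time, or strictly inside some heavy-path segment $v_1,\dots,v_m$, in which case the queried node is $w=v_p$ for the smallest $p$ such that $a_0 \circ f(v_1,\dots,v_p) \ge k$, where $a_0$ is the accumulator at $v_1$. Because $a_0 \circ f(v_1,\dots,v_p)$ is non-decreasing in $p$ (again part~2), we find $p$ by binary search, using the level-ancestor structure to reach the $p$-th node of the heavy path in constant time and the $\Phi$-based formula to evaluate $f(v_1,\dots,v_p)$; this costs $O(\log N)$, hence $O(\log N)$ overall. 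Part~2 uses the leaf-extraction technique from the proof of Lemma~\ref{lemma:bille}: we first obtain $z = \mathtt{lca}(u_i,u_j)$ together with $f$ of the root-to-$z$ path as in part~1, and then, following the auxiliary left and right pointers of Lemma~\ref{lemma:bille}, we enumerate in $O(\log N + j - i)$ time the $O(j-i)$ maximal subtrees of $\mathcal{T}(G)$ whose leaf intervals partition $[i..j]$, each together with the value of $f$ on the path from the root to its root (maintained by the same $\Phi$-based accumulation, since every jump along a boundary skips a sub-path of a heavy path); finally, for each such subtree $S$ we run a depth-first traversal of $S$ through $G$ in $O(|S|)$ time, accumulating $g$ of the edge weights, and output $f$ of the root-to-leaf path for each of its leaves. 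Since $\sum |S|$ and the total number of leaves are $O(j-i)$, the overall time is $O(\log N + j - i)$, and the space remains $O(n)$.

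I expect the main obstacle to be part~3: unlike the smooth case of Lemma~\ref{lemma:bille}, the quantity queried by a weighted ancestor query on $\mathcal{T}(G)$ is an accumulated value rather than a value stored at a leaf, so it cannot be reduced to a single weighted-ancestor query on $\tau(G)$; instead one has to identify, within $O(\log N)$ total time, the unique heavy path on which $f$ crosses the threshold and then binary-search inside it, which is exactly where parts~2 and~3 of the definition of telescoping function (monotonicity and constant-time evaluation of sub-path values) are needed. A secondary technical point is checking that the binarization of Lemma~\ref{lemma:expansion} is compatible with edge weights, which is taken care of by letting artificial arcs contribute $\mathbb{I}$.
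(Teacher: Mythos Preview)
Your proposal is correct and follows essentially the same approach as the paper: binarize via Lemma~\ref{lemma:expansion} with identity-weight artificial arcs, store at every node of $G'$ the value of $f$ along its $\tau(G')$-path to the sink (your $\Phi$, the paper's $v.\mathtt{count}$), accumulate $f$ during the top-down traversal of Lemma~\ref{lemma:bille} using part~3 of the telescoping definition on each heavy segment and $g(\omega(\cdot,\cdot))$ on each light arc, and handle the weighted ancestor query by detecting the heavy segment on which the accumulator crosses $k$ and binary-searching inside it with level-ancestor queries on $\tau(G')$. The only cosmetic differences are that the paper searches for both $u_i$ and $u_j$ in parallel (to locate $v$ as their divergence point) rather than searching toward a single leaf, and it notes explicitly that the binary search cannot terminate at an artificial node (since artificial arcs carry weight $\mathbb{I}$).
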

\begin{proof}
If a node $v$ in the DAG has out-degree greater than two, we expand it as described in Lemma \ref{lemma:expansion}, assigning weight $\mathbb{I}$ to all arcs that end in an artificial internal node of the expanded DAG, and assigning the weight of arc $(v,w)$ to the arc that connects an artificial internal node to out-neighbor $w$ of $v$ in $G$. We also store a pointer to $v$ inside each artificial internal node. Let $G'$ be the expanded version of $G$. At every node $v$ of $G'$ we store variable $v.\mathtt{count}=f(P(v))$, where $P(v)$ is the path from $v$ to the sink of $G'$ that uses only arcs in the spanning tree $\tau(G')$. We traverse $G'$ as described in Lemma \ref{lemma:bille}: at the current node $u$, we compute its highest ancestor $v$ in $\tau(G')$ that lies in the path, from the source of $G'$ to the sink of $G'$, that corresponds to the $i$-th leaf of $\mathcal{T}(G')$. We use $u.\mathtt{count}$ and $v.\mathtt{count}$ to evaluate $f$ in constant time on the path from $u$ to $v$ along $\tau(G')$, and we cumulate such value to the output. For each arc $(v,w)$ that does not belong to $\tau(G')$, we compute $g(\omega(v,w))$ and we cumulate it to the output. 

To evaluate $f$ on the path from the root of $\mathcal{T}(G)$ to $\mathtt{lca}(v_i,v_j)$, we follow the extraction strategy described in Lemma \ref{lemma:bille}, using in the last step $u.\mathtt{count}$ and $v.\mathtt{count}$, where $u$ is the current node and $v$ is the (possibly artificial) node of $G'$ that corresponds to $\mathtt{lca}(v_i,v_j)$ in $\mathcal{T}(G')$. We use the extraction strategy of Lemma \ref{lemma:bille} also to evaluate $f$ on all leaves of $\mathcal{T}(G)$ in the depth-first interval $[i..j]$: every time we take a right pointer or a left pointer $(u,v)$, we cumulate weight $u.\mathtt{count} \circ y$ to the current value of $f$, where $y$ is the inverse of $v.\mathtt{count}$, and we start from such value of $f$ when visiting the subgraph of $G'$ that starts at $v$.

To support weighted ancestor queries on $f$ and $\mathcal{T}(G)$, we build a data structure that supports \emph{level ancestor queries} on $\tau(G')$: given a node $v$ and a path length $d$, such data structure returns the ancestor $u$ of $v$ in $\tau(G')$ such that the path from the root of $\tau(G')$ to $u$ contains exactly $d$ nodes. The level ancestor data structure described in \cite{bender2004level,berkman1994finding} takes $O(n)$ words of space and it answers queries in constant time. 
We search again for the $i$-th and $j$-th leaf in parallel, cumulating $f$ using the weights of light arcs and of heavy paths as done before. Let $u$ be the current node in this search, and let $x$ be the current value of $f$: if $x<k$, but the value of $f$ is at least $k$ at the next node $v$ such that the path from $u$ to $v$ in $G'$ belongs to $\tau(G')$, we binary search the nodes $w$ on the path from $u$ to $v$, using level ancestor queries from $u$ and comparing $x \circ u.\mathtt{count} \circ y$ to $k$, where $y$ is the inverse of $w.\mathtt{count}$. The result of the binary search is not an artificial node.
\end{proof}

Let $[i..j]$ be the identifier of a node of $\mathcal{T}(G)$, and let $[i'..j']$ be the identifier of its weighted ancestor. Since it is easy to transform the node of $G$ that corresponds to $[i'..j']$ into interval $[i'..j']$ itself, Lemma \ref{lemma:DAG_sum_weights} effectively implements a map from $[i..j]$ to $[i'..j']$ in $O(\log{N})$ time.

Applying Lemma \ref{lemma:DAG_sum_weights} to $\CDAWG_T$ is all we need to support the additional operations in Table \ref{tab:suffixTree2} efficiently:

\begin{theorem}\label{thm:suffixTreeRepresentation}
Let $T \in [1..\sigma]^{n-1}\#$ be a string. There are two representations of $\ST_T$ that support the operations in Table \ref{tab:suffixTree2} and in Table \ref{tab:suffixTree} with the specified time and space complexities.
\end{theorem}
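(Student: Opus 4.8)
The plan is to instantiate Lemma~\ref{lemma:DAG_sum_weights} on the weighted $\CDAWG_T$ of the representation of Section~\ref{sec:cpm2015}, using the edge weights $\omega(u,v)=\SP{v}-\SP{u}$ of Property~\ref{property:revCDAWG}, and to show that each new operation in Table~\ref{tab:suffixTree2} reduces to a constant number of the three primitives (root-to-leaf evaluation, root-to-$\mathtt{lca}$ evaluation, batch extraction, weighted ancestor) provided by that lemma, combined with the $O(1)$ and $O(\log\log n)$ primitives of Table~\ref{tab:suffixTree}. The guiding observation is the one already established in the excerpt: $\mathcal{T}(\CDAWG_T)=\ST_T$, and with the chosen weights, evaluating the telescoping sum-of-weights function on the path from the root of $\mathcal{T}(\CDAWG_T)$ to its $i$-th leaf returns $\SP{}$ of that leaf, i.e. the lexicographic rank $\SA^{-1}$-style information, while a weighted ancestor query with threshold $k$ returns the highest suffix-tree node whose leaf interval starts at depth-first position $\geq k$. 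Since $\CDAWG_T$ has out-degree at least two at every non-sink node, Lemma~\ref{lemma:DAG_sum_weights} (via Lemma~\ref{lemma:expansion}) applies directly, and its $O(\newe_T)$-word data structure is subsumed by the existing $O(\newe_T+\newe_{\REV T})$ (resp.\ $O(\newe_T)$) space budget, so neither row of the tables grows asymptotically.

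Concretely I would go operation by operation. First, $\mathtt{selectLeaf}(i)$ and $\mathtt{lca}(v,w)$ are immediate: the $i$-th leaf of $\ST_T$ is the $i$-th leaf of $\mathcal{T}(\CDAWG_T)$ returned by the root-to-leaf query together with its identifying interval, and $\mathtt{lca}$ of two leaves (more generally of two nodes given by their leaf intervals $[i..j]$, $[i'..j']$, which we merge into a single query interval) is exactly the $\mathtt{lca}$ primitive of Lemma~\ref{lemma:DAG_sum_weights}, whence $\mathtt{depth}$ follows by evaluating the root-to-$\mathtt{lca}$ path and counting light arcs plus heavy-path hops, or alternatively by storing depth counters analogously to $v.\mathtt{count}$. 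Next, $\SA[i]=\mathtt{locateLeaf}(\mathtt{selectLeaf}(i))$ and $\ISA[i]=\mathtt{leafRank}(\text{leaf of suffix }i)$; $\SA[i..j]$ and $\ISA[i..j]$ and $\LCP[i..j]$ use the batch-extraction primitive in $O(\log n+j-i)$ time, with $\LCP[i]=\mathtt{stringDepth}(\mathtt{lca}(\mathtt{selectLeaf}(i-1),\mathtt{selectLeaf}(i)))$ and $\PLCP[i]=\LCP[\ISA[i]]$. The operations $\mathtt{ancestor}(v,d)$ (the ancestor of $v$ at tree depth $d$) and $\mathtt{strAncestor}(v,d)$ (the ancestor of $v$ whose string depth is the appropriate bound) are the weighted ancestor query of item~3 of Lemma~\ref{lemma:DAG_sum_weights}: for $\mathtt{strAncestor}$ I weight each CDAWG arc additionally by its label length $\gamma.\mathtt{right}$ (still telescoping, being a sum), while for $\mathtt{ancestor}$ I weight every arc by $1$; in both cases the query returns the target node's interval, from which $\mathtt{id}(\cdot)$ is reconstructed in $O(\log\log n)$ time using the stored $\RLCSA_T$ data (or in $O(1)$ without it in row~2). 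Finally $T[i..j]$ is obtained by locating the leaf for suffix $T[i..|T|]$, then reading its incoming edge labels character by character; packing $\Theta(\log_\sigma n)$ characters per word along the $\REV{\CDAWG}_T$ / label machinery of Section~\ref{sec:cpm2015} yields the $O(\log n+(j-i)/\log_\sigma n)$ bound of the table.

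The main obstacle, and the part that needs genuine care rather than routine bookkeeping, is verifying the correctness of the weighted-ancestor reductions in the presence of the node-expansion of Lemma~\ref{lemma:expansion}: one must check that the artificial binary-tree internal nodes, which carry weight $\mathbb{I}=0$ on their incoming arcs, never get returned as the answer (the lemma already asserts ``the result of the binary search is not an artificial node,'' so I would just invoke that) and, more subtly, that the depth-counting operations $\mathtt{depth}$ and $\mathtt{ancestor}$ count only real suffix-tree edges and not the artificial ones --- this is handled by putting weight $0$ on artificial arcs for the string-length function but needs a separate bookkeeping variable for the unit-weight ``tree depth'' function, or equivalently by storing at each real CDAWG node its tree depth in $\mathcal{T}(\CDAWG_T)=\ST_T$ and at each artificial node the depth of the real node it expands. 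A second point requiring attention is that some table entries (e.g.\ $\mathtt{selectLeaf}$, $\LCP[i]$, $T[i]$) are listed at $O(\log n)$ even though a single primitive call costs $O(\log n)$ and a subsequent $\mathtt{locateLeaf}$ or $\mathtt{stringDepth}$ costs only $O(1)$ or $O(\log\log n)$; one just needs to observe $\log\log n = O(\log n)$ so the composition stays within budget, and that the $O(\log\log n)$-time reconstruction of BWT intervals from a CDAWG node (needed to form $\mathtt{id}(v)$) is exactly the step already used in Section~\ref{sec:cpm2015}. Everything else is a direct composition of already-proven components, so the proof is essentially a table-driven enumeration once the weighting scheme and the expansion caveat are pinned down.
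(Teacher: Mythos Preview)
Your overall plan---instantiate Lemma~\ref{lemma:DAG_sum_weights} on $\CDAWG_T$ and read off each operation---is the paper's approach, and your reductions for $\mathtt{lca}$, $\mathtt{depth}$, $\mathtt{ancestor}$, and $\mathtt{strAncestor}$ (the latter two with arc weights $1$ and $\gamma.\mathtt{right}$ respectively) coincide with the paper's. However, your primary weighting $\omega(u,v)=\SP{v}-\SP{u}$ on the \emph{forward} CDAWG is not useful: summing those offsets along the root-to-$i$-th-leaf path of $\mathcal{T}(\CDAWG_T)=\ST_T$ just returns $i-1$, so your ``$\SA^{-1}$-style information'' is circular. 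The paper instead weights each arc $\gamma$ of $\CDAWG_T$ by $\gamma.\mathtt{right}$; then the root-to-$i$-th-leaf sum is the string depth $s$ of that leaf, which is exactly what is needed to complete the tuple $(v',s,i,i)$ for $\mathtt{selectLeaf}$ and to output $\SA[i]=|T|-s+1$. Without $s$ your chain $\SA[i]=\mathtt{locateLeaf}(\mathtt{selectLeaf}(i))$ cannot be executed, since $\mathtt{locateLeaf}$ reads $s$ from the identifier. You also omit $\mathtt{leftmostLeaf}/\mathtt{rightmostLeaf}$, which are $O(1)$ in the table; the paper achieves this by precomputing and storing at every CDAWG node $v'$ the string-length $v'.\mathtt{left}$ of the leftmost descending path, not via Lemma~\ref{lemma:DAG_sum_weights}.

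The more serious gap is $\ISA$. You write $\ISA[i]=\mathtt{leafRank}(\text{leaf of suffix }i)$, but no primitive locates ``the leaf of suffix $i$'' from a string position without already having $\ISA$. The paper's fix is to instantiate Lemma~\ref{lemma:DAG_sum_weights} a \emph{second} time, on the compacted $\REV{\CDAWG}_T$: by Property~3 the $i$-th leaf of $\mathcal{T}(\REV{\CDAWG}_T)$ in depth-first order \emph{is} suffix $T[i..|T|]$, and with the $\SP$-offset weights (your weighting, but now on the reversed DAG) the telescoping sum along the path to that leaf equals $\ISA[i]$ by Property~\ref{property:revCDAWG}. From this $\PLCP[i]$ follows, and $T[i]$, $T[i..j]$ follow because $\mathcal{T}(\REV{\CDAWG}_T)$ is a parse tree of $T$; the $O(\log n+(j-i)/\log_\sigma n)$ bound comes from applying the grammar random-access result of~\cite{belazzougui2015access} to this grammar, not from reading suffix-tree edge labels character by character as you propose.
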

\begin{proof}
Operation $\mathtt{selectLeaf}(i)$ returns an identifier of the $i$-th leaf of $\ST_T$ in lexicographic order. Recall from Section \ref{sec:cpm2015} that we store in a variable $\gamma\mathtt{.right}$ the number of characters of the right extension implied by arc $\gamma$ of $\CDAWG_T$. Thus, the length of the suffix associated with a leaf of $\ST_T$ (or equivalently, the position of that leaf in right-to-left string order) is the sum of all weights in the source-to-sink path of $\CDAWG_T$ that corresponds to the leaf. Since the sum of such weights is a telescoping function, we use the data structures in Lemma~\ref{lemma:DAG_sum_weights}, built on these weights, to compute the value $s$ of the sum in $O(\log{n})$ time, and we return tuple $(v,s,i,i)$, where $v$ is the sink of $\CDAWG_T$. Returning $|T|-s+1$ instead is enough to implement $\SA_{T}[i]$. Since Lemma \ref{lemma:DAG_sum_weights} supports also the extraction of all values of a telescoping function inside a depth-first range of leaves $[i..j]$, implementing $\SA_{T}[i..j]$ is straightforward.

Operation $\mathtt{lca}(i,j)$ returns the identifier of the lowest common ancestor, in $\ST_T$, of the $i$-th and the $j$-th leaf in lexicographic order. We use Lemma \ref{lemma:DAG_sum_weights} to compute both the node $v$ of $\CDAWG_T$ that corresponds to such common ancestor, and its string depth $s$, returning tuple $(v,s,x,y)$, where the range $[x..y] \supseteq [i..j]$ of the lowest common ancestor is computed during the top-down traversal of $\CDAWG_T$ using the weighted ancestor data structure on $\tau(\CDAWG_T)$. A similar approach allows one to return $\LCP[i]$, and a slight variation of the approach used to compute $\SA_{T}[i..j]$ supports also $\LCP[i..j]$. Operation $\mathtt{depth}(\mathtt{id}(v))$ returns the depth of the node $v$ of $\ST_T$ whose identifier is $\mathtt{id}(v)$. Since $\mathtt{id}(v)$ contains the range $[i..j]$ of $v$ in $\BWT_T$, we can proceed as in operation $\mathtt{lca}(i,j)$, and return the length of the path that the search traversed from the source of $\CDAWG_T$ to the node of $\CDAWG_T$ that corresponds to $v$. Operation $\mathtt{leftmostLeaf}(\mathtt{id}(v))$ returns the identifier of the smallest leaf in lexicographic order in the subtree of $\ST_T$ rooted at node $v$. Let $\mathtt{id}(v)=(v',\ell,i,j)$, and let $W$ be the longest maximal repeat in the equivalence class of node $v'$. Then, $\mathtt{leftmostLeaf}(\mathtt{id}(v))=(w',\ell+v'.\mathtt{left},i,i)$, where $w'$ is the sink of $\CDAWG_T$, and $v'.\mathtt{left}$ is the string length of the path, in $\ST_T$, that goes from the node of $\ST_T$ with string label $W$ to its leftmost leaf. We store $v'.\mathtt{left}$ at every node $v'$ of the CDAWG. 
Operation $\mathtt{rightmostLeaf}$ can be handled symmetrically. 
Operation $\mathtt{stringAncestor}(\mathtt{id}(v),d)$ (respectively, $\mathtt{ancestor}(\mathtt{id}(v),d)$) returns the identifier of the highest ancestor of $v$ in $\ST_T$ whose string depth (respectively, depth) is at least $d$. This can be implemented with the weighted ancestor query provided by Lemma \ref{lemma:DAG_sum_weights}, where the weight of arc $\gamma$ of $\CDAWG_T$ is $\gamma.\mathtt{right}$ (respectively, one).

Finally, by Property \ref{property:revCDAWG}, we support access to the value of the inverse suffix array at string position $i$ by building the data structures of Lemma \ref{lemma:DAG_sum_weights} on the compacted $\REV{\CDAWG}_T$, with arc weights corresponding to offsets between nested BWT intervals, and with a weighted ancestor data structure on $\tau(\REV{\CDAWG}_T)$ based on offsets between string positions. Note that all arcs that end at the same node of the compacted $\REV{\CDAWG}_T$ have distinct weights. Then, we evaluate the sum of edge weights from the root of $\mathcal{T}(\REV{\CDAWG}_T)$ to its $i$-th leaf in depth-first order. Implementing $\ISA_{T}[i..j]$ is also straightforward, and $\PLCP[i]$ can be supported using $\ISA_{T}[i]$. Assume that, while building $\REV{\CDAWG}_T$, we keep the first character of the label of every arc of $\CDAWG_T$ that starts from the root, we propagate it during compaction, and we store it at the nodes as described in Lemma \ref{lemma:bille}. Then, since $\mathcal{T}(\REV{\CDAWG}_T)$ is a parse tree of $T$, we can also return $T[i]$ in $O(\log{n})$ time and $T[i..j]$ in $O(\log{n}+j-i)$ time. Since the compacted reversed CDAWG is a grammar for $T$, the time for extracting $T[i..j]$ can be reduced to $O(\log{n}+(j-i)/\log_{\sigma}n)$ by using the $\mathtt{access}$ query described in \cite{belazzougui2015access}.
\end{proof}

\begin{corollary} \label{cor:sa}
Given a string $T \in [1..\sigma]^{n-1}\#$, there is a representation of the suffix array of $T$, of the inverse suffix array of $T$, of the LCP array of $T$, of the permuted LCP array of $T$, and of $T$ itself, that takes $O(\newe_T)$ words of space, and that supports random access to any position in $O(\log{n})$ time.
\end{corollary}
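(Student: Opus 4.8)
The plan is to read off Corollary~\ref{cor:sa} from the $O(\newe_T)$-word representation established in Theorem~\ref{thm:suffixTreeRepresentation} (the second row of Table~\ref{tab:suffixTree}, augmented as in Table~\ref{tab:suffixTree2}); the work is almost entirely a matter of checking that each of the five random-access queries can be served without any component whose size depends on $\newe_{\REV{T}}$. Concretely, I would keep only: $\CDAWG_T$ with the node fields $v.\mathtt{length}$, $v.\mathtt{size}$, $v.\mathtt{left}$ and the in-neighbor boundaries of Property~\ref{obs:inNeighbors}; the arc fields $\gamma.\mathtt{char}$ and $\gamma.\mathtt{right}$; the compacted $\REV{\CDAWG}_T$, carrying the first characters of the root arcs of $\CDAWG_T$ propagated through compaction; and the data structures of Lemma~\ref{lemma:DAG_sum_weights} built (i) on $\CDAWG_T$ with arc weights $\gamma.\mathtt{right}$, and (ii) on the compacted $\REV{\CDAWG}_T$ with arc weights equal to the offsets between nested $\BWT_T$ intervals, together with the smooth-function variant of Lemma~\ref{lemma:bille} on $\REV{\CDAWG}_T$ for the stored characters.

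Next I would recall, from the proof of Theorem~\ref{thm:suffixTreeRepresentation}, which query drives each array. $\SA_T[i] = |T| - s + 1$, where $s$ is the sum of the $\gamma.\mathtt{right}$ weights on the source-to-sink path of $\CDAWG_T$ that reaches the $i$-th leaf of $\mathcal{T}(\CDAWG_T) = \ST_T$; this sum is telescoping, so Lemma~\ref{lemma:DAG_sum_weights} computes it in $O(\log n)$ time. By Property~\ref{property:revCDAWG}, $\ISA_T[i]$ is one plus the sum of the offset weights on the path from the root of $\mathcal{T}(\REV{\CDAWG}_T)$ to its $i$-th leaf, again computed in $O(\log n)$ time by Lemma~\ref{lemma:DAG_sum_weights}. $\LCP_T[i]$ is the string depth of the lowest common ancestor in $\ST_T$ of the $(i-1)$-st and $i$-th leaves in lexicographic order, which part~1 of Lemma~\ref{lemma:DAG_sum_weights} on $\CDAWG_T$ returns in $O(\log n)$ time. $\PLCP_T[i] = \LCP_T[\ISA_T[i]]$, a composition of two $O(\log n)$ queries. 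Finally $T[i]$ is the character stored at the $i$-th leaf of the parse tree $\mathcal{T}(\REV{\CDAWG}_T)$, which is a smooth function and is retrieved by Lemma~\ref{lemma:bille} in $O(\log n)$ time. None of these steps reads an edge label of $\ST_T$, so no RLBWT — in particular no RLBWT of $\REV{T}$ — is ever consulted.

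For the space bound I would note that $\CDAWG_T$ has at least as many arcs as nodes, hence $O(\newe_T)$ nodes; the compacted $\REV{\CDAWG}_T$ is obtained from $\CDAWG_T$ by reversing arcs and then collapsing unary nodes, an operation that creates no new arcs, so it too has $O(\newe_T)$ arcs and nodes; and every instance of the data structure of Lemma~\ref{lemma:DAG_sum_weights} or Lemma~\ref{lemma:bille} occupies space linear in the number of arcs of its underlying DAG. Summing these contributions, the whole structure fits in $O(\newe_T)$ words.

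I do not expect a genuine obstacle here — Corollary~\ref{cor:sa} is essentially a repackaging of Theorem~\ref{thm:suffixTreeRepresentation} — only two points that need care. First, both $\CDAWG_T$ and the compacted $\REV{\CDAWG}_T$ may contain nodes of out-degree greater than two, so one must apply the binarizing expansion of Lemma~\ref{lemma:expansion} (placing weight $\mathbb{I}$ on the artificial internal arcs) and then follow the $\mathtt{real}$ pointers on any returned node, so that the answer is always a genuine DAG node. Second, one must verify that the offsets used as arc weights on $\REV{\CDAWG}_T$ really constitute a telescoping (indeed additive) function whose path sums are lexicographic ranks; this is exactly the content of Property~\ref{obs:equivalenceClassInBWT}.5 combined with Property~\ref{property:revCDAWG}. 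With these two checks in place, the corollary follows immediately.
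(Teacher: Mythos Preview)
Your proposal is correct and matches the paper's approach: the corollary is stated without proof because it is an immediate consequence of the constructions in the proof of Theorem~\ref{thm:suffixTreeRepresentation}, and your argument simply unpacks those constructions query by query. The only difference is that you spell out the space accounting and the two technical checks (binarization via Lemma~\ref{lemma:expansion} and the telescoping property of the offset weights), which the paper leaves implicit.
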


Note that Corollary \ref{cor:sa} yields immediately a representation of the compressed suffix array of $T$ \cite{sadakane2007compressed} that takes $O(\newe_T)$ words of space.

\section{Extensions and conclusion}

Our data structures provide immediate support for a number of queries of common use in pattern matching, in addition to those listed in Tables \ref{tab:suffixTree2} and \ref{tab:suffixTree}. For example, recall that an \emph{internal pattern matching query} $(i,j)$ asks for all the $\mathtt{occ}$ starting positions of $T[i..j]$ inside a string $T$ of length $n$. We can support such query in $O(\log{n}+\mathtt{occ})$ time, by combining an inverse suffix array query, a string ancestor query, and the extraction strategy of Lemma \ref{lemma:DAG_sum_weights}. Similarly, combining an inverse suffix array query with a lowest common ancestor query and a string depth query, allows one to compute the longest common prefix between two given suffixes of $T$ in $O(\log{n})$ time. Along the same lines, operation $\mathtt{letter}(\mathtt{id}(v),i)$, which returns the $i$-th character of the label of node $v$ of the suffix tree, can be supported in $O(\log{n})$ time. 
We can also implement in constant time operation $\mathtt{deepestNode}(\mathtt{id}(v))$, which returns the identifier of the first node with largest depth (or string depth) in the subtree of the suffix tree rooted at $v$ \cite{navarro2014fully}. If we choose not to store the BWT intervals of the nodes of the CDAWG as in the second row of Tables \ref{tab:suffixTree2} and \ref{tab:suffixTree}, we can implement in $O(\log{n})$ time operation $\mathtt{suffixLink}(\mathtt{id}(v),i)$, which returns the identifier of the node of the suffix tree that is reachable from $v$ after taking $i$ suffix links. This can be done by computing $\mathtt{lca}(\mathtt{id}(u),\mathtt{id}(w))$, where $\mathtt{id}(v)=(v',k,a,b)$, $\mathtt{id}(u)=(z,e,x,x)$, $\mathtt{id}(w)=(z,f,y,y)$, $z$ is the sink of the CDAWG, $e=|T|-(\SA[a]+i)+1$, $f=|T|-(\SA[b]+i)+1$, $x=\ISA[\SA[a]+i]$ and $y=\ISA[\SA[b]+i]$. By using the representation described in~\cite{bille2015random}, we can also support in $O(\log{n})$ time operations like $\mathtt{preorderSelect}(i)$, $\mathtt{postorderSelect}(i)$, $\mathtt{preorderRank}(v)$, $\mathtt{postorderRank}(v)$, $\mathtt{treeLevelSuccessor}(v)$ and $\mathtt{treeLevelPredecessor}(v)$. However, some operations on the topology of the suffix tree are not yet implemented by our data structures (see e.g. \cite{navarro2014fully}): it would be interesting to know whether they can be supported efficiently within the same space budget.


Recall from Section \ref{sec:cpm2015} that our current representation of the suffix tree supports reading the label of an arc in $O(\log{\log{n}})$ time per character, using the RLBWT of $\REV{T}$. It would be interesting to know whether this bound can be improved, and whether the RLBWT of $\REV{T}$ can be dropped. Another question for further research is whether the ubiquitous $O(\log{n})$ term in Table \ref{tab:suffixTree2} can be reduced while keeping the same asymptotic space budget, or whether a lower bound makes it impossible, along the lines of \cite{verbin2013data}.

On the applied side, it is not yet clear whether there is a subset of our algorithms that is practically applicable, and whether it could achieve competitive tradeoffs with respect to state-of-the-art suffix tree representations for highly repetitive collections. It would also be interesting to try and use our data structures for tuning specific applications to repetitive strings in practice, like matching statistics and substring kernels. For example, it turns out that some weighting functions used in substring kernels are telescoping \cite{smola2003fast}. Since our data structures support matching statistics \cite{belazzougui2015composite}, and since the computation of some substring kernels can be mapped onto matching statistics \cite{smola2003fast}, we can compute some substring kernels between a fixed $T$ and a query string of length $m$ in $O(m\log{n})$ time, using a data structure that takes just $O(\newe_T)$ words of space.



\appendix


\bibliographystyle{plain}
\bibliography{paper}


\end{document}